\documentclass[aps,pra,onecolumn,superscriptaddress,floatfix]{revtex4-2}

\usepackage{amsmath,amssymb,amsfonts,mathtools}
\usepackage{amsthm}
\usepackage{graphicx}
\usepackage[section]{placeins}
\usepackage{bm}
\usepackage[dvipsnames,svgnames]{xcolor}
\usepackage[colorlinks=true]{hyperref}
\hypersetup{
	linkcolor   = NavyBlue,
	citecolor   = BrickRed,
	urlcolor    = Teal,
	pdftitle    = {Boundary-Aware QFT Block-Encoding of Fractional Laplacians},
	pdfauthor   = {YJ},
	pdfkeywords = {fractional Laplacian, open boundary, quantum algorithms,
		block-encoding, QFT, circulant embedding, spectral oracle},
}
\usepackage{qcircuit}

\newcommand{\Fh}{\mathcal{F}_h}
\newcommand{\Fhinv}{\mathcal{F}_h^{-1}}

\newcommand{\AahN}{A^{(N)}_{\alpha,h}}

\newcommand{\lmax}{\lambda_{\max}}

\newcommand{\ket}[1]{|#1\rangle}
\newcommand{\bra}[1]{\langle #1|}
\newcommand{\diag}{\mathrm{diag}}

\newtheorem{theorem}{Theorem}

\newtheorem{remark}{Remark}
\newtheorem{proposition}{Proposition}

\begin{document}
	
	\title{Boundary-Aware QFT Block-Encoding of Fractional Laplacians}
	
	\author{Younes Javanmard}
	\email{javanmard.younes@gmail.com}
	% \affiliation{Institut f\"ur Theoretische Physik, Leibniz Universit\"at Hannover,
		% Appelstra\ss e 2, 30167 Hannover, Germany}
	\affiliation{Institut f\"ur Physik und Astronomie, Technische Universit\"at Berlin,
		Hardenbergstra\ss e 36, EW 7-1, 10623 Berlin, Germany}
	
	\author{Sina Kazemian}
	\email{skazemi5@uwo.ca}
	\affiliation{University of Western Ontario, Department of Physics and Astronomy, London, Ontario, Canada}
	
	\date{\today}
	
	\begin{abstract}
		We study the quantum Fourier transform (QFT) block-encoding of the semi-discrete fractional Laplacian on
		bounded domains with open, zero-extension boundary conditions. In the notation
		of the main construction, the target operator is the finite Toeplitz truncation
		\(A^{(N)}_{\alpha,h}\) obtained from the full-lattice semi-discrete operator
		with symbol \(|\xi|^\alpha\). A finite QFT register, however, diagonalizes
		circulant matrices rather than Toeplitz truncations. The native QFT circuit
		therefore implements a periodic surrogate \(\widetilde A^{(N)}_{\alpha,h}\),
		not the open-boundary operator. We identify this mismatch through an exact
		Toeplitz-to-circulant aliasing identity.
		
		To recover the open-boundary action, we zero-pad the state into a larger
		\(M\)-point QFT register, apply the same Fourier-symbol block-encoding, and
		compress back to the physical subspace. The resulting compressed block
		satisfies
		\(P_{N\to M}^{\dagger}\widetilde A^{(M)}_{\alpha,h}P_{N\to M}
		= A^{(N)}_{\alpha,h}+E^{(M)}\), where \(E^{(M)}\) is controlled by the tail
		of the semi-discrete convolution kernel. Thus the QFT layer implements the
		fractional symbol, while zero-padding supplies the open-boundary geometry. The
		construction is an operator-compilation primitive for boundary-aware quantum
		simulation rather than a complete PDE solver.
	\end{abstract}
	
	\maketitle
	
	\section{Introduction}
	
	Fractional Laplacians are simple in Fourier space and subtle in real space. On
	the full line or on a periodic domain, \((-\Delta)^{\alpha/2}\) acts by
	multiplication with the symbol \(|\xi|^\alpha\). On a bounded domain, however,
	the operator is nonlocal and boundary-sensitive: how the function is extended
	outside the physical interval is part of the finite-domain operator
	\cite{MetzlerKlafter2000,Applebaum2009,Laskin2002,DuoWykZhang2018,HuangOberman2014,
		Kwasnicki2017,CaffarelliSilvestre2007,BucurValdinoci2016}. Such boundary-sensitive nonlocal operators arise across anomalous diffusion,
	L\'evy transport, nonlocal heat conduction, and bounded-domain fractional
	models, where finite-size effects, exterior data, and long-range interactions
	can change the effective operator~\cite{defterli2015fractional,duo2017comparative,lischke2020fractional,dhar2019anomalous,kazemian2025electron,BonitoPasciak2015,AcostaBorthagaray2017,BonitoLeiPasciak2019}.
	
	This distinction matters for quantum algorithms. QFT circuits naturally
	implement Fourier multipliers, so \(|\xi|^\alpha\) is an attractive diagonal
	oracle. This Fourier-space strategy is closely related to recent QFT-based
	approaches to differential-operator and wave-equation simulation
	\cite{Lubasch2025,Wright2024}. But a finite QFT register diagonalizes
	circulants and therefore imposes periodic finite geometry unless a boundary
	mechanism is added. The
	open-boundary semi-discrete fractional Laplacian considered here is instead a
	Toeplitz truncation of an infinite-lattice convolution kernel. Thus the native
	QFT circuit block-encodes the periodic finite-domain operator rather than the
	open-boundary target.
	
	We use the semi-discrete fractional-Laplacian operator of
	Ref.~\cite{ZhouZhang2023}, whose full-lattice symbol is \(|\xi|^\alpha\). The
	open-boundary target \(\AahN\) is
	obtained by zero extension, application of the full-lattice operator, and
	restriction to the physical sites. The QFT-native operator
	\(\widetilde{A}^{(N)}_{\alpha,h}\), by
	contrast, is a circulant periodization of the same Toeplitz kernel. The aliasing
	identity shows exactly where the artificial wrap-around terms enter.
	Zero-padding the input into a larger QFT register moves those wrap-around
	couplings away from the physical support, leaving only a tail residual
	\(E^{(M)}\).
	
	This places the present work between classical Toeplitz/circulant embedding
	methods and quantum Fourier-space operator constructions. Classical FFT methods
	exploit circulant embeddings to apply Toeplitz matrices efficiently
	\cite{StrangCirculant1986,DavisCirculant1979,Gray2006,ChanNg1996}, whereas QFT-based quantum
	algorithms naturally implement diagonal Fourier multipliers. The point addressed
	here is the boundary interface between these two views: the symbol oracle is
	periodic by construction, while the zero-extension fractional Laplacian is not.
	Our contribution is to isolate this Toeplitz-versus-circulant obstruction and
	give a compressed QFT block whose boundary error is explicitly controlled.
	
	The result is not an end-to-end PDE solver. It is a boundary-aware
	block-encoding primitive that can be used by higher-level routines such as
	QSVT, Hamiltonian simulation, quantum linear-system solvers, and quantum
	algorithms for differential equations and PDEs
	\cite{harrow2009quantum,childs2017quantum,gilyen2019quantum,low2019hamiltonian,Cao2013,MontanaroPallister2016,BerryChildsOstranderWang2017,CostaJordanOstrander2019,ChildsLiuOstrander2021,MartynRossiTanChuang2021}.
	The remainder of the paper develops the operator distinction, the zero-padded
	compressed block, its tail-error bound, and numerical checks of the
	boundary-aliasing mechanism.
	
	\section{Operators and boundary mismatch}
	\label{sec:operators}
	
	Fix a mesh size \(h>0\) and \(0<\alpha\le2\). On the full lattice \(h\mathbb Z\),
	the semi-discrete fractional-Laplacian operator is the Fourier multiplier
	\begin{equation}
		A_{\alpha,h}=\Fhinv M_{|\xi|^\alpha}\Fh .
		\label{eq:A-full}
	\end{equation}
	Equivalently, it is convolution by a real even kernel \(c_m=\omega_{\alpha,h}(m)\).
	The detailed transform convention and kernel formula are collected in
	Appendix~\ref{app:kernel}.
	
	For the finite physical domain \(\Lambda_N=\{0,\ldots,N-1\}\), the open
	zero-extension target is
	\begin{equation}
		A_{\alpha,h}^{(N)}=R_NA_{\alpha,h}Z_N,
		\qquad
		(A_{\alpha,h}^{(N)})_{ij}=c_{i-j}.
		\label{eq:A-target}
	\end{equation}
	Here \(Z_N\) embeds a vector into the full lattice by setting exterior entries to
	zero, and \(R_N\) restricts back to \(\Lambda_N\). Thus \(\AahN\) is the finite
	Toeplitz matrix that represents the open-boundary fractional Laplacian.
	
	The construction is easiest to follow by keeping the three operators in
	Table~\ref{tab:operators} separate.
	
	\begin{table}[h]
		\centering
		\begin{tabular}{lll}
			\hline\hline
			Operator & Geometry & Role \\
			\hline
			\(A_{\alpha,h}\) & infinite lattice & full semi-discrete symbol operator \\
			\(\AahN\) & open finite interval & Toeplitz target from zero extension \\
			\(\widetilde{A}^{(N)}_{\alpha,h}\) & periodic finite cycle & QFT-native circulant surrogate \\
			\hline\hline
		\end{tabular}
		\caption{Operators appearing in the construction. The full-lattice operator
			carries the Fourier symbol, the Toeplitz truncation is the open-boundary target,
			and the circulant surrogate is the operator produced natively by a finite QFT
			register.}
		\label{tab:operators}
	\end{table}
	
	Let \(N=2^n\), and use the FFT-ordered frequency grid in
	Eq.~\eqref{eq:fft-grid},
	\begin{equation}
		\bar\xi_k =
		\begin{cases}
			2\pi k/(Nh), & 0\le k<N/2,\\
			2\pi k/(Nh)-2\pi/h, & N/2\le k<N.
		\end{cases}
		\label{eq:fft-grid}
	\end{equation}
	With this ordering, the finite QFT construction gives
	\begin{equation}
		\widetilde{A}^{(N)}_{\alpha,h}
		=
		QFT_N^{-1}\operatorname{diag}(|\bar\xi_k|^\alpha)QFT_N .
		\label{eq:A-circ}
	\end{equation}
	This matrix is circulant. It is the correct operator for periodic boundary
	geometry, but it is not the open-boundary Toeplitz target in Eq.~\eqref{eq:A-target}.
	
	\begin{proposition}[Toeplitz-to-circulant aliasing]
		\label{prop:alias}
		The entries of the QFT-native circulant surrogate satisfy
		\begin{equation}
			(\widetilde{A}^{(N)}_{\alpha,h})_{ij}
			=
			\sum_{\ell\in\mathbb Z}c_{i-j+\ell N}.
			\label{eq:aliasing}
		\end{equation}
		Therefore \(\widetilde{A}^{(N)}_{\alpha,h}-\AahN\) consists exactly of the
		nonzero periodic images \(\sum_{\ell\ne0}c_{i-j+\ell N}\).
	\end{proposition}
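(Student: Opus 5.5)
The plan is to compute both sides explicitly and compare them via Poisson summation on the lattice. On the full lattice, the kernel is the inverse semi-discrete Fourier transform of the symbol over one Brillouin zone:
\[
c_m=\frac{h}{2\pi}\int_{-\pi/h}^{\pi/h}|\xi|^\alpha e^{i m h\xi}\,d\xi ,
\]
up to the normalization fixed in Appendix~\ref{app:kernel}. This is the standard consequence of Eq.~\eqref{eq:A-full}: a Fourier multiplier on \(h\mathbb Z\) acts as convolution with the Fourier coefficients of its \(2\pi/h\)-periodic symbol. On the finite side, expanding Eq.~\eqref{eq:A-circ} with \(QFT_N\) entries \(N^{-1/2}e^{2\pi i jk/N}\) gives
\[
(\widetilde A^{(N)}_{\alpha,h})_{ij}=\frac1N\sum_{k=0}^{N-1}|\bar\xi_k|^\alpha e^{2\pi i k(i-j)/N}.
\]
Here I would use that \(e^{2\pi i k(i-j)/N}=e^{i(i-j)h\bar\xi_k}\), since the shift by \(2\pi/h\) in Eq.~\eqref{eq:fft-grid} changes the phase by \(e^{-2\pi i(i-j)}=1\). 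This already shows that the matrix is circulant, with entries depending only on \(i-j \bmod N\).

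The key step is to observe that the grid \(\{\bar\xi_k\}\) is exactly the \(N\)-point uniform sampling of the Brillouin zone \([-\pi/h,\pi/h)\). The finite sum is therefore the \(N\)-point inverse DFT of the samples of the periodic symbol \(\sigma(\xi)=|\xi|^\alpha\) extended \(2\pi/h\)-periodically. Substituting the Fourier series \(\sigma(\xi)=\sum_{m\in\mathbb Z}c_m e^{-imh\xi}\) and exchanging sums gives
\[
\frac1N\sum_k\sum_m c_m e^{i(i-j-m)2\pi k/N}=\sum_m c_m\,\mathbf 1[m\equiv i-j \bmod N]=\sum_{\ell}c_{i-j+\ell N}.
\]
This uses the discrete orthogonality \(\frac1N\sum_k e^{2\pi i rk/N}=\mathbf 1[N\mid r]\). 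The sign of \(\ell\) is immaterial because we sum over all of \(\mathbb Z\), and \(c\) is even.

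The main obstacle is justifying the interchange of sums and the pointwise validity of the Fourier series at the sample points. The periodized \(|\xi|^\alpha\) is continuous but has a kink at \(\xi=0\) and at \(\pm\pi/h\), so I would check absolute summability of \(c_m\). From Appendix~\ref{app:kernel}, one expects \(c_m=O(|m|^{-1-\alpha})\), either by integration by parts or by the known asymptotics. For \(\alpha>0\) this is summable, so the Fourier series converges absolutely and uniformly to the continuous periodic symbol, and Fubini applies. The case \(\alpha=2\) is fine as well, since the kernel still decays like \(m^{-2}\) because of the boundary kink. If decay were not available, I would fall back on Fejér or Cesàro summation, using continuity of the periodized symbol.

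Finally, for \(0\le i,j<N\), the \(\ell=0\) term is \(c_{i-j}=(\AahN)_{ij}\) by Eq.~\eqref{eq:A-target}. Subtracting it from the aliasing identity yields the stated description of \(\widetilde A^{(N)}_{\alpha,h}-\AahN\) as the sum of the nonzero periodic images.
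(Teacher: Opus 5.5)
Your proposal is correct and follows the paper's proof essentially step for step. You write the circulant entries as the inverse DFT of the sampled symbol, insert the kernel (Fourier-series) representation of \(|\bar\xi_k|^\alpha\), and fold modulo \(N\) with the finite Fourier comb; your absolute-summability argument also justifies the exchange of sums, which the paper leaves implicit.

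One minor correction, which does not affect the argument: by Eq.~\eqref{eq:kernel-decay} the exponent is \(r_\alpha=\min(2,1+\alpha)\). Because of the kink at \(\pm\pi/h\), the kernel therefore decays only like \(|m|^{-2}\) for every \(1<\alpha\le2\), not just for \(\alpha=2\). That rate is still summable, which is all your interchange step needs.
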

	
	Equation~\eqref{eq:aliasing} is the boundary obstruction in one line. The
	Toeplitz target uses the physical separation \(i-j\), whereas the circulant
	surrogate sums over all periodic images \(i-j+\ell N\). For matrix entries well
	inside the interval, these images are far away and are suppressed by the kernel
	tail. Near opposite corners, however, the periodic image can be close: the entry
	\((0,N-1)\) contains \(c_{-(N-1)+N}=c_1\), which is generally \(O(1)\). This is
	why the native QFT operator may behave well on bulk-localized states while still
	failing uniformly as an open-boundary operator.

	\section{QFT block-encoding and zero-padded compression}
	\label{sec:block}
	
	The diagonal spectral oracle stores the normalized symbol in an ancilla
	amplitude. With \(\lmax=(\pi/h)^\alpha\), define
	\begin{equation}
		\phi_k=\frac{|\bar\xi_k|^\alpha}{\lmax}.
		\label{eq:normalized-symbol}
	\end{equation}
	The ideal oracle uses the normalized symbol in Eq.~\eqref{eq:normalized-symbol}
	and acts as
	\begin{equation}
		U_D\ket{k}\ket{0}
		=
		\ket{k}\left(\phi_k\ket{0}+\sqrt{1-\phi_k^2}\ket{1}\right),
		\label{eq:diagonal-oracle}
	\end{equation}
	with any arithmetic work qubits uncomputed. The native block-encoding unitary is
	\begin{equation}
		U_{\rm BE}^{(N)}
		=
		(QFT_N^{-1}\otimes I)U_D(QFT_N\otimes I).
		\label{eq:native-unitary}
	\end{equation}
	Its ancilla block is
	\begin{equation}
		(I\otimes\langle0|)U_{\rm BE}^{(N)}(I\otimes|0\rangle)
		=
		\widetilde{A}^{(N)}_{\alpha,h}/\lambda_{\max}.
		\label{eq:native-block}
	\end{equation}
	This places the construction in the broader family of LCU, block-encoding, and
	quantum-signal-processing based simulation methods
	\cite{ChildsWiebe2012,BerryChildsCleveKothariSomma2015,LowChuang2017}.
	
	The native QFT and block-encoding circuits are shown in Figs.~\ref{fig:qft-circuit} and~\ref{fig:be-circuit}.
	The QFT circuit and its role as a spectral basis transformation are standard in
	quantum computation \cite{NielsenChuang2010}.
	
	\begin{figure}[t]
		\centering
		\includegraphics[width=0.62\linewidth]{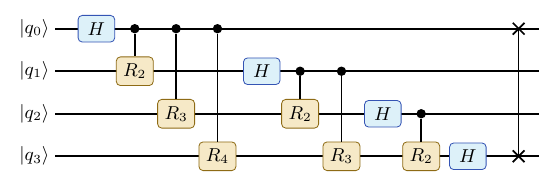}
		\caption{Illustrative four-qubit QFT decomposition. The finite QFT is the
			spectral layer used by the block-encoding, but by itself it imposes periodic
			rather than open geometry.}
		\label{fig:qft-circuit}
	\end{figure}
	
	\begin{figure}[t]
		\centering
		\includegraphics[width=0.72\linewidth]{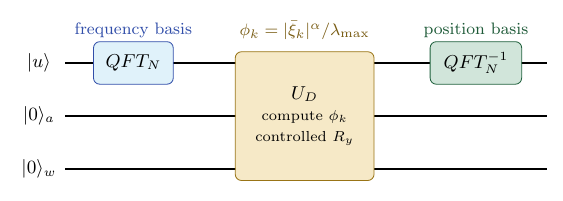}
		\caption{Native QFT block-encoding of the periodic circulant surrogate
			\(\widetilde{A}^{(N)}_{\alpha,h}\). The frequency oracle encodes the fractional symbol as an
			ancilla amplitude.}
		\label{fig:be-circuit}
	\end{figure}
	
	To recover the open-boundary action, choose \(M=2^m\) with \(M\ge2N\), let
	\(P_{N\to M}:\mathbb C^N\to\mathbb C^M\) pad a vector by zeros, and apply the
	same symbol-oracle construction on the larger register. The size-\(M\) circulant
	surrogate is denoted \(\widetilde{A}^{(M)}_{\alpha,h}\).
	
	\begin{theorem}[Zero-padded compressed block]
		\label{thm:embedding}
		Let \(M=2^m\) with \(M\ge 2N\). Then
		\begin{equation}
			P_{N\to M}^{\dagger}
			\widetilde{A}^{(M)}_{\alpha,h}
			P_{N\to M}
			=
			A_{\alpha,h}^{(N)}+E^{(M)},
			\label{eq:compressed-operator}
		\end{equation}
		where, for \(0\le i,j\le N-1\),
		\begin{equation}
			(E^{(M)})_{ij}
			=
			\sum_{\ell\neq0}c_{i-j+\ell M}.
			\label{eq:E-entries}
		\end{equation}
		Consequently, the compressed ancilla block of the size-\(M\) QFT
		block-encoding satisfies
		\begin{equation}
			(P_{N\to M}^{\dagger}\otimes\langle0|)
			U_{\rm BE}^{(M)}
			(P_{N\to M}\otimes|0\rangle)
			=
			\frac{A_{\alpha,h}^{(N)}+E^{(M)}}{\lambda_{\max}}.
			\label{eq:compressed-block}
		\end{equation}
	\end{theorem}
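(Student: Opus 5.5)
The proof applies Proposition~\ref{prop:alias} with \(N\) replaced by \(M\), and then reads off the leading block. The aliasing identity is a statement about the circulant surrogate built from the FFT-ordered grid, and holds for any power-of-two register size. So for all \(0\le p,q\le M-1\) it gives
\[
(\widetilde{A}^{(M)}_{\alpha,h})_{pq}=\sum_{\ell\in\mathbb Z}c_{p-q+\ell M}.
\]
The operator \(P_{N\to M}\) embeds \(\mathbb C^N\) as the first \(N\) coordinates of \(\mathbb C^M\), and \(P_{N\to M}^{\dagger}\) projects onto them. Hence \(P_{N\to M}^{\dagger}\widetilde{A}^{(M)}_{\alpha,h}P_{N\to M}\) is exactly the upper-left \(N\times N\) block, with entries \(\sum_{\ell}c_{i-j+\ell M}\) for \(0\le i,j\le N-1\).

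Splitting off the \(\ell=0\) term gives \(c_{i-j}\), which is \((A^{(N)}_{\alpha,h})_{ij}\) by Eq.~\eqref{eq:A-target}. The remaining sum is \((E^{(M)})_{ij}\) as in Eq.~\eqref{eq:E-entries}, which proves Eq.~\eqref{eq:compressed-operator}.

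The condition \(M\ge2N\) is not needed for the identity itself; it controls the size of the error. Since \(|i-j|\le N-1\), every image satisfies \(|i-j+\ell M|\ge |\ell|M-(N-1)\ge M/2+1\) for \(\ell\ne0\). No wrap-around coupling therefore falls within distance \(M/2\), unlike the corner term \(c_1\) in the native case. I would add this as a remark to connect with the later tail bound.

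For Eq.~\eqref{eq:compressed-block}, I would use Eq.~\eqref{eq:native-block} at size \(M\):
\[
(I\otimes\langle0|)\,U^{(M)}_{\rm BE}\,(I\otimes|0\rangle)=\widetilde{A}^{(M)}_{\alpha,h}/\lambda_{\max},
\]
with the same \(\lambda_{\max}=(\pi/h)^\alpha\), since the normalization depends only on \(h\). The operators \(P_{N\to M}\) and \(P_{N\to M}^{\dagger}\) act on the system register alone, so they commute with the ancilla projections:
\[
(P_{N\to M}^{\dagger}\otimes\langle0|)\,U^{(M)}_{\rm BE}\,(P_{N\to M}\otimes|0\rangle)
=P_{N\to M}^{\dagger}\bigl[(I\otimes\langle0|)\,U^{(M)}_{\rm BE}\,(I\otimes|0\rangle)\bigr]P_{N\to M}.
\]
Substituting the operator identity then gives the result.

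The only step needing care is the convergence of the lattice sum in the aliasing identity. Proposition~\ref{prop:alias} is assumed here, but I would note that for \(0<\alpha<2\) the kernel decays like \(|m|^{-1-\alpha}\), so the sum converges absolutely. For \(\alpha=2\) the kernel is still summable, via the explicit formula in Appendix~\ref{app:kernel}. This absolute convergence justifies regrouping the sum into the \(\ell=0\) term plus the tail.
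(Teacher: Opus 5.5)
Your proof is correct and follows the paper's argument: apply Proposition~\ref{prop:alias} at size \(M\), restrict to \(0\le i,j\le N-1\), split off the \(\ell=0\) term, and insert the exact size-\(M\) ancilla block. Your explicit commutation of \(P_{N\to M}\) with the ancilla projection, and your point that \(M\ge2N\) matters only for the error bound, are both accurate. One small correction to your convergence aside: by Eq.~\eqref{eq:kernel-decay} the decay is \(|m|^{-\min(2,1+\alpha)}\), so for \(1<\alpha\le2\) it is \(|m|^{-2}\) (coming from the kink of the periodized symbol at \(\xi=\pm\pi/h\)). Also, the explicit formula in Appendix~\ref{app:kernel} is for \(\alpha=1\), not \(\alpha=2\). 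Absolute summability holds in every case, since the exponent exceeds \(1\).
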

	
	This is a compressed block on the zero-padded logical subspace. If it is
	rewritten as a strict block-encoding of an \(n\)-qubit logical operator, the
	padding qubits and arithmetic work qubits must be included in the ancilla count.
	
	\begin{figure}[t]
		\centering
		\includegraphics[width=0.82\linewidth]{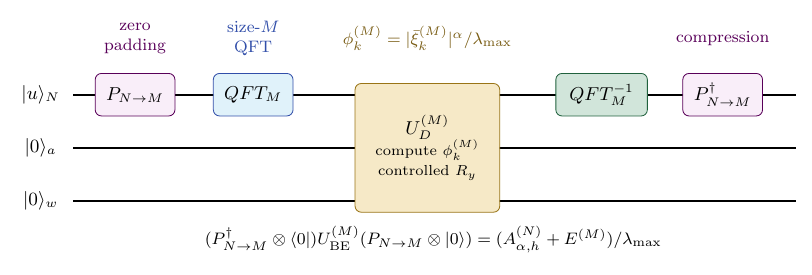}
		\caption{Compressed open-boundary block from zero-padding. The size-\(M\)
			QFT circuit is periodic on the padded register, while compression back to the
			first \(N\) coordinates gives \((\AahN+E^{(M)})/\lmax\).}
		\label{fig:embedding-circuit}
	\end{figure}
	
	\begin{remark}[Exact \(2N\)-Toeplitz embedding]
		The classical Toeplitz-FFT construction can be viewed as a different quantum
		route \cite{chan1988optimal,wu2026high}. With the convention
		\begin{equation}
			(C_{2N})_{ij}=g_{(i-j)\bmod 2N},
			\label{eq:toeplitz-embedding-convention}
		\end{equation}
		one can choose the generator in Eq.~\eqref{eq:toeplitz-embedding-generator},
		\begin{equation}
			g=(c_0,c_1,\ldots,c_{N-1},0,c_{-(N-1)},\ldots,c_{-1})
			\label{eq:toeplitz-embedding-generator}
		\end{equation}
		so that the compression identity
		\begin{equation}
			P_{N\to 2N}^{\dagger}C_{2N}P_{N\to 2N}=A_{\alpha,h}^{(N)}
			\label{eq:exact-2N-embedding}
		\end{equation}
		holds exactly. Since \(C_{2N}\) is circulant, it is diagonalized by the \(2N\)-point
		Fourier transform. A quantum implementation of this exact embedding would
		therefore require an oracle for the finite-circulant eigenvalues \(\gamma_k\),
		normalized by some \(\Lambda\ge\max_k|\gamma_k|\). This places the exact
		Toeplitz-embedding route closer to a finite-operator compilation problem, where
		one may use LCU-style or structured-operator interfaces rather than a direct
		analytic symbol oracle \cite{javanmard2026coefficient}. This is not the same
		oracle as the simple symbol sample \(|\bar\xi_k|^\alpha/\lambda_{\max}\). The
		construction in this paper keeps the simple symbol oracle and pays instead a
		controllable kernel-tail residual.
	\end{remark}
	
	\section{Error bound and resource scaling}
	\label{sec:error}
	
	The residual in Eq.~\eqref{eq:E-entries} consists only of kernel images whose
	distance from the physical support is at least \(M-N+1\). The Schur-test estimate
	proved in Appendix~\ref{app:proofs} gives
	\begin{equation}
		\|E^{(M)}\|_2
		\le
		\sum_{|r|\ge M-N+1}|c_r|.
		\label{eq:tail-bound}
	\end{equation}
	Using the kernel decay in Eq.~\eqref{eq:kernel-decay-main},
	\begin{equation}
		|c_r|\le C_\alpha h^{-\alpha}|r|^{-r_\alpha},
		\qquad
		r_\alpha=\min(2,1+\alpha),
		\label{eq:kernel-decay-main}
	\end{equation}
	one obtains the sufficient scaling
	\begin{equation}
		M=O\!\left(N+\epsilon^{-1/(r_\alpha-1)}\right),
		\qquad
		r_\alpha=\min(2,1+\alpha).
		\label{eq:M-scaling}
	\end{equation}
	For \(\alpha=1\), this gives \(M-N=O(\epsilon^{-1})\); for
	\(0<\alpha<1\), the heavier tail gives \(M-N=O(\epsilon^{-1/\alpha})\).
	
	The gate count per block-encoding call separates into QFT cost, reversible
	symbol arithmetic, and rotation synthesis:
	\begin{equation}
		G_{\rm BE}
		=
		O(m^2)+C_\alpha(m,b)+C_{\rm rot}(b),
		\qquad
		M=2^m.
		\label{eq:GBE}
	\end{equation}
	Here \(b\) is the arithmetic precision, \(C_\alpha(m,b)\) is the cost of
	evaluating the folded normalized symbol to \(b\) bits, and \(C_{\rm rot}(b)\) is
	the cost of the controlled rotation. For fixed rational \(\alpha\), standard
	reversible arithmetic gives \(C_\alpha(m,b)\) polynomial in \(m\) and \(b\)
	\cite{Haener2018}. The classical \(O(2N\log(2N))\) FFT cost refers to applying a
	circulant to an explicitly stored vector; it is not the gate count of the QFT
	block-encoding, whose Fourier layer is logarithmic in \(M\).
	
	Finite arithmetic, rotation synthesis, and approximate-QFT errors add to the
	boundary residual. At the level of this operator primitive, the total
	implementation error has the form shown in Eq.~\eqref{eq:total-implementation-error},
	\begin{equation}
		O(\epsilon^{(M)}+\epsilon_{\rm arith}+\epsilon_{\rm rot}+\epsilon_{\rm QFT}).
		\label{eq:total-implementation-error}
	\end{equation}
	
	\subsection{Three-dimensional tensor-product extension at the operator-identity level}
	
	The theorem above is stated for the one-dimensional operator, where the
	Toeplitz-versus-circulant obstruction is most transparent. At the
	operator-identity level, the same padding-and-compression mechanism extends to
	cubic tensor-product grids with open zero-extension boundaries. The
	grid and multi-index notation are fixed in Eq.~\eqref{eq:3d-grid}:
	\begin{equation}
		\Lambda_N^3=\{0,\ldots,N-1\}^3,
		\qquad
		\mathbf i=(i_x,i_y,i_z),
		\label{eq:3d-grid}
	\end{equation}
	Let \(A_{\alpha,h}^{(N,3)}\) denote the restriction of the full-lattice
	three-dimensional semi-discrete operator to \(\Lambda_N^3\) after zero
	extension. Its entries have the block-Toeplitz form in
	Eq.~\eqref{eq:3d-target}:
	\begin{equation}
		(A_{\alpha,h}^{(N,3)})_{\mathbf i,\mathbf j}
		=
		c_{\mathbf i-\mathbf j}.
		\label{eq:3d-target}
	\end{equation}
	The finite QFT-native operator is defined by Eq.~\eqref{eq:3d-circ},
	\begin{equation}
		\widetilde{A}^{(N,3)}_{\alpha,h}
		=
		(QFT_N^{\otimes 3})^\dagger
		D_{\alpha,h}^{(N,3)}
		QFT_N^{\otimes 3},
		\label{eq:3d-circ}
	\end{equation}
	with diagonal symbol matrix
	\begin{equation}
		D_{\alpha,h}^{(N,3)}
		=
		\operatorname{diag}
		\left[
		\left(
		|\bar\xi_{k_x}|^2+
		|\bar\xi_{k_y}|^2+
		|\bar\xi_{k_z}|^2
		\right)^{\alpha/2}
		\right].
		\label{eq:3d-symbol}
	\end{equation}
	This operator is periodic in all three directions. Its entries satisfy the
	multidimensional aliasing identity in Eq.~\eqref{eq:3d-aliasing}:
	\begin{equation}
		(\widetilde{A}^{(N,3)}_{\alpha,h})_{\mathbf i,\mathbf j}
		=
		\sum_{\boldsymbol{\ell}\in\mathbb Z^3}
		c_{\mathbf i-\mathbf j+N\boldsymbol{\ell}}.
		\label{eq:3d-aliasing}
	\end{equation}
	Thus the native three-dimensional QFT circuit introduces wrap-around couplings
	across faces, edges, and corners of the cube.
	
	To recover the open-boundary action, pad each spatial direction from \(N\) to
	\(M\). With \(P_{N\to M}^{(3)}\) denoting the zero-padding isometry from
	\(\mathbb C^{N^3}\) into \(\mathbb C^{M^3}\), the compressed operator is
	Eq.~\eqref{eq:3d-compressed}:
	\begin{equation}
		(P_{N\to M}^{(3)})^\dagger
		\widetilde{A}^{(M,3)}_{\alpha,h}
		P_{N\to M}^{(3)}
		=
		A_{\alpha,h}^{(N,3)}+E^{(M,3)},
		\label{eq:3d-compressed}
	\end{equation}
	where the residual entries are
	\begin{equation}
		(E^{(M,3)})_{\mathbf i,\mathbf j}
		=
		\sum_{\boldsymbol{\ell}\in\mathbb Z^3\setminus\{\mathbf 0\}}
		c_{\mathbf i-\mathbf j+M\boldsymbol{\ell}}.
		\label{eq:3d-residual}
	\end{equation}
	The same row/column-sum argument gives the multidimensional tail bound in
	Eq.~\eqref{eq:3d-tail-bound}:
	\begin{equation}
		\|E^{(M,3)}\|_2
		\le
		\sum_{\mathbf r:\,\|\mathbf r\|_\infty\ge M-N+1}
		|c_{\mathbf r}|.
		\label{eq:3d-tail-bound}
	\end{equation}
	The convergence rate in three dimensions is therefore determined by the
	corresponding multidimensional kernel tail. The normalization for the isotropic
	symbol is Eq.~\eqref{eq:3d-lambda-max}:
	\begin{equation}
		\lambda_{\max}^{(3D)}
		=
		\left(\sqrt{3}\,\pi/h\right)^\alpha.
		\label{eq:3d-lambda-max}
	\end{equation}
	Thus the 3D circuit uses \(3m\) data qubits for an \(M^3\) padded grid, one
	block-encoding ancilla, and the arithmetic work qubits required to evaluate the
	normalized symbol in Eq.~\eqref{eq:3d-normalized-symbol}:
	\begin{equation}
		\frac{
			\left(
			|\bar\xi_{k_x}|^2+
			|\bar\xi_{k_y}|^2+
			|\bar\xi_{k_z}|^2
			\right)^{\alpha/2}
		}{
			\lambda_{\max}^{(3D)}
		}.
		\label{eq:3d-normalized-symbol}
	\end{equation}
	The QFT layer consists of three exact one-dimensional QFTs and therefore has
	cost \(O(3m^2)\), plus the reversible arithmetic and rotation-synthesis costs.
	As in the one-dimensional construction, it is exact for the padded periodic
	surrogate and approximate for the open-boundary target through the
	multidimensional kernel-tail residual.
	
	\section{Numerical illustration}
	\label{sec:numerics}
	
	The numerical checks are not PDE-solver benchmarks. They test the operator
	identities above. Figure~\ref{fig:heatmaps} visualizes the matrix-level
	aliasing structure. Figure~\ref{fig:functional-benchmark} compares the operator
	action on simple bounded-domain functions, using reference functional tests.
	Figure~\ref{fig:scaling} confirms the decay of the zero-padding residual.
	
	For \(N=64\), \(h=1\), and \(\alpha=1.5\), the open Toeplitz matrix and the
	native circulant agree near the diagonal but differ near opposite corners. This
	is the wrap-around structure predicted by Eq.~\eqref{eq:aliasing}. For
	\(\alpha=1\), \(h=1\), the coefficient formula in
	Eq.~\eqref{eq:alpha-one-kernel}
	\(c_m=((-1)^m-1)/(\pi m^2)\) for \(m\ne0\) makes the corner effect explicit:
	\begin{equation}
		(\AahN)_{0,N-1}=c_{-(N-1)},\qquad
		(\widetilde{A}^{(N)}_{\alpha,h})_{0,N-1}=c_{-(N-1)}+c_1+\cdots .
		\label{eq:corner-aliasing}
	\end{equation}
	The second term in Eq.~\eqref{eq:corner-aliasing} contains the artificial
	coupling \(c_1=-2/\pi\).
	
	\begin{figure}[h]
		\centering
		\includegraphics[width=0.92\linewidth]{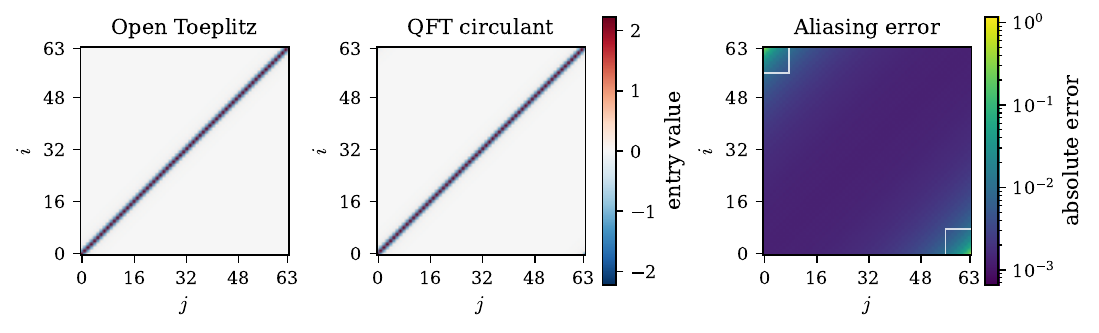}
		\caption{Matrix structure for \(N=64\), \(h=1\), and \(\alpha=1.5\). The
			third panel shows the absolute aliasing error on a logarithmic color scale;
			the highlighted corners are the wrap-around regions.}
		\label{fig:heatmaps}
	\end{figure}
	
	Figure~\ref{fig:functional-benchmark} gives a direct functional comparison
	using three bounded-domain test functions also used in
	Ref.~\cite{ZhouZhang2023}. We compare the open-boundary Toeplitz target, the
	native QFT-circulant surrogate, and the zero-padded compressed operator. The
	top row overlays the operator outputs, while the bottom row shows the pointwise
	absolute error on a logarithmic scale. The native QFT curve plays the role of a
	periodic Fourier method and displays boundary artifacts, whereas the padded
	construction follows the open-boundary target.
	
	\begin{figure}[t]
		\centering
		\includegraphics[width=0.98\linewidth]{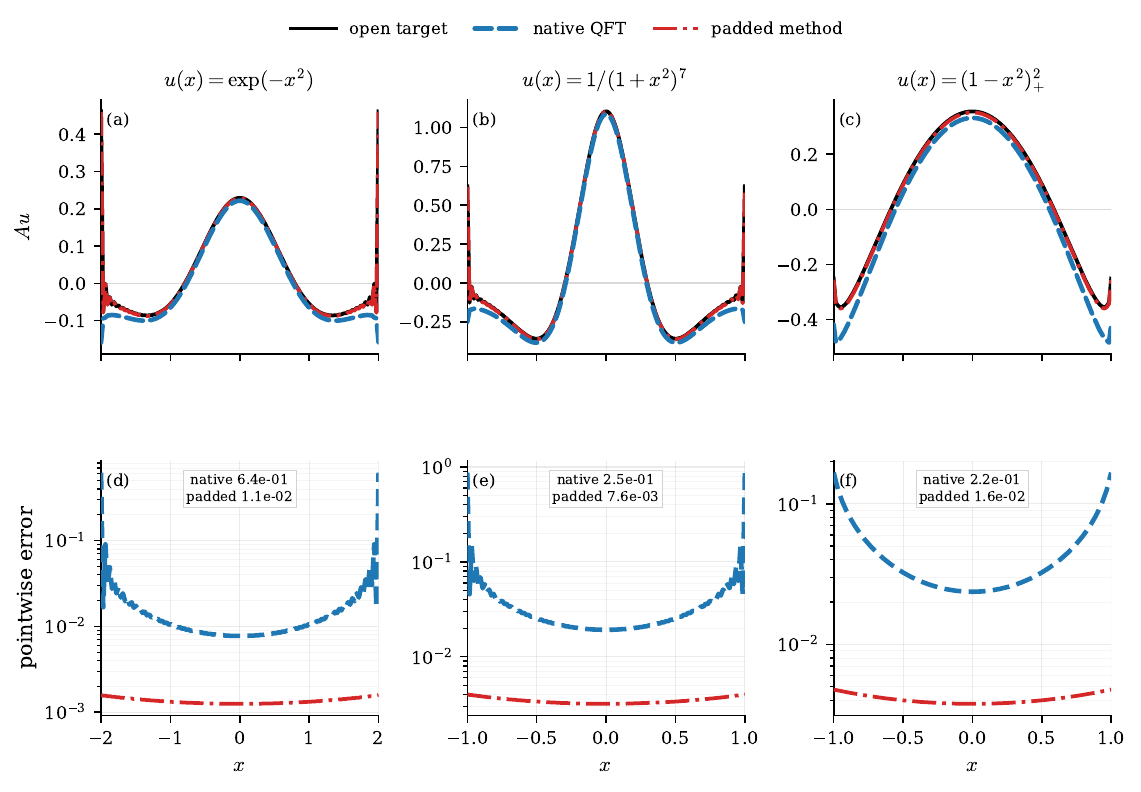}
		\caption{
			Functional comparison using three bounded-domain reference test functions.
			The top row compares the open-boundary Toeplitz action
			\(A^{(N)}_{\alpha,h}u\), the native QFT-circulant action
			\(\widetilde A^{(N)}_{\alpha,h}u\), and the zero-padded compressed action
			\(P_{N\to M}^{\dagger}\widetilde A^{(M)}_{\alpha,h}P_{N\to M}u\).
			The bottom row shows the corresponding pointwise absolute errors on a
			logarithmic scale. The native QFT surrogate follows periodic geometry and
			develops boundary artifacts, while the padded construction tracks the
			open-boundary target.
		}
		\label{fig:functional-benchmark}
	\end{figure}
	
	\begin{figure}[t]
		\centering
		\includegraphics[width=0.86\linewidth]{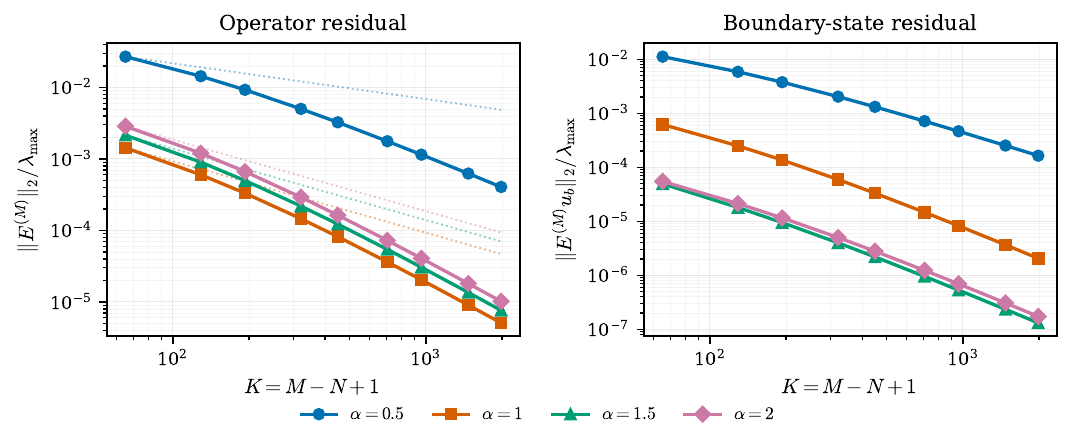}
		\caption{Decay of the zero-padding residual. Both panels are normalized by
			\(\lambda_{\max}\). Thin dotted lines show the predicted tail-bound slope. The left panel reports the dense-matrix spectral norm of \(E^{(M)}\), while
			the right panel reports \(\|E^{(M)}u_b\|_2/\lambda_{\max}\) for a normalized
			boundary-localized Gaussian \(u_b\) centered at \(j\) with width
			\(\sigma\).}
		\label{fig:scaling}
	\end{figure}
	
	\FloatBarrier
	
	\section{Discussion and conclusion}
	\label{sec:conclusion}
	
	The obstruction studied here is algebraic rather than algorithmic: finite QFT
	registers diagonalize circulants, while the open zero-extension fractional
	Laplacian is a Toeplitz truncation. The native QFT block-encoding of the
	fractional symbol is therefore exact for the periodic surrogate
	\(\widetilde{A}^{(N)}_{\alpha,h}\), but not for the open-boundary target
	\(A_{\alpha,h}^{(N)}\). The difference is precisely the periodized-image
	coupling in Eq.~\eqref{eq:aliasing}.
	
	Zero-padding supplies a boundary adapter. By embedding the \(N\)-site state into
	an \(M\)-site periodic register, applying the same QFT symbol block-encoding,
	and compressing back, one obtains the open-boundary Toeplitz action plus the
	kernel-tail residual \(E^{(M)}\). The residual is controlled by the padded
	separation \(M-N\), while the fractional-symbol oracle itself is unchanged.
	
	At the resource level, the Fourier layer costs \(O(m^2)\) gates for an exact
	\(M=2^m\) QFT, plus reversible arithmetic and rotation synthesis for the symbol
	oracle. This should be distinguished from the classical \(O(2N\log(2N))\)
	Toeplitz-FFT matrix-vector cost, which acts on an explicitly stored vector. The
	present construction is therefore best viewed as a boundary-aware
	operator-compilation primitive for quantum algorithms that use QFT-based
	fractional-Laplacian block-encodings. State preparation, success amplification,
	conditioning of higher-level solvers, and observable estimation are outside the
	scope of this operator-level construction. These higher-level questions are
	complementary to tensor-network-assisted quantum-search and
	Hamiltonian-simulation workflows, where the main difficulty lies in state
	preparation, search, or variational structure rather than in the
	boundary-aware compilation of a fixed operator oracle
	\cite{javanmard2024integrating}.
	
	The tensor-product-grid extension follows by applying the same padding and
	compression in each spatial direction, with the precision governed by the
	corresponding multidimensional kernel tail. Other boundary conditions would
	require different extension and compression maps.
	
	\begin{acknowledgments}
		The authors would like to dedicate this work to the memory of Dr. Saeed Gholibeiglou, a dear
		friend, colleague, and physicist whose curiosity, kindness, and friendship are
		remembered with deep respect.
	\end{acknowledgments}
	
	\newpage
	
	\appendix
	
	\section{Additional numerical diagnostics}
	\label{app:numerics}
	
	Figure~\ref{fig:gaussian-diagnostics} contrasts the action of the open target,
	the native periodic surrogate, and the zero-padded compressed operator on two
	Gaussian inputs. The bulk-localized state is nearly insensitive to the boundary
	model, while the boundary-localized state exposes the wrap-around tail of the
	native circulant. Figure~\ref{fig:center-sweep} shows the same effect as the
	Gaussian center is moved away from the boundary.
	
	\begin{figure}[h]
		\centering
		\includegraphics[width=0.92\linewidth]{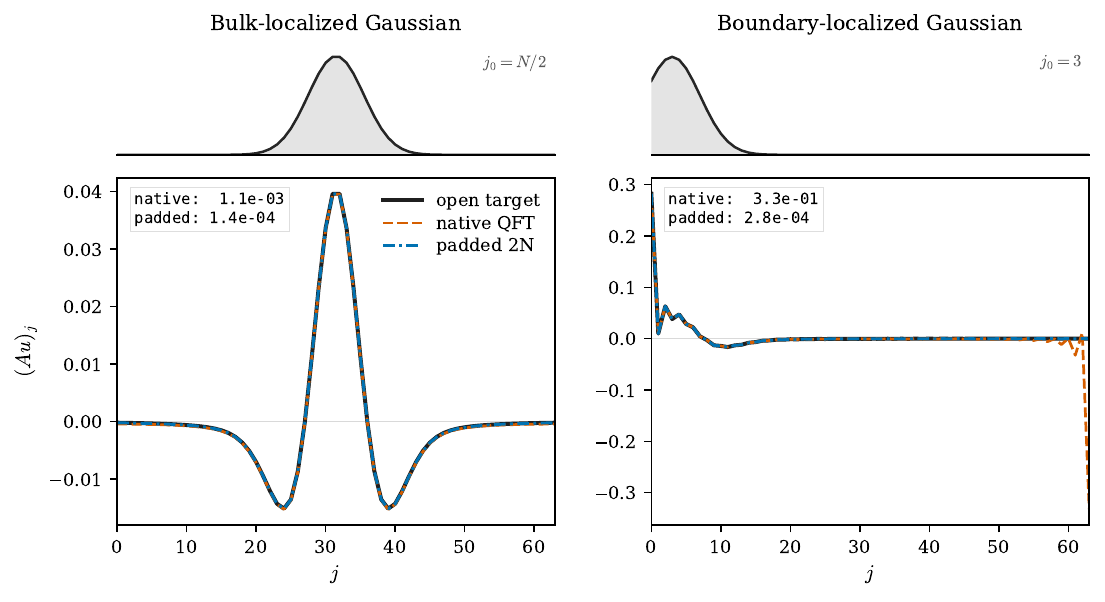}
		\caption{Gaussian-state diagnostics for \(N=64\), \(h=1\),
			\(\alpha=1.5\), and \(\sigma=4\). The panels compare the open-boundary
			target \(A^{(N)}_{\alpha,h}\), the native QFT-circulant surrogate
			\(\widetilde{A}^{(N)}_{\alpha,h}\), and the zero-padded embedding
			\(P^\dagger\widetilde{A}^{(2N)}_{\alpha,h}P\).}
		\label{fig:gaussian-diagnostics}
	\end{figure}
	
	\begin{figure}[h]
		\centering
		\includegraphics[width=0.86\linewidth]{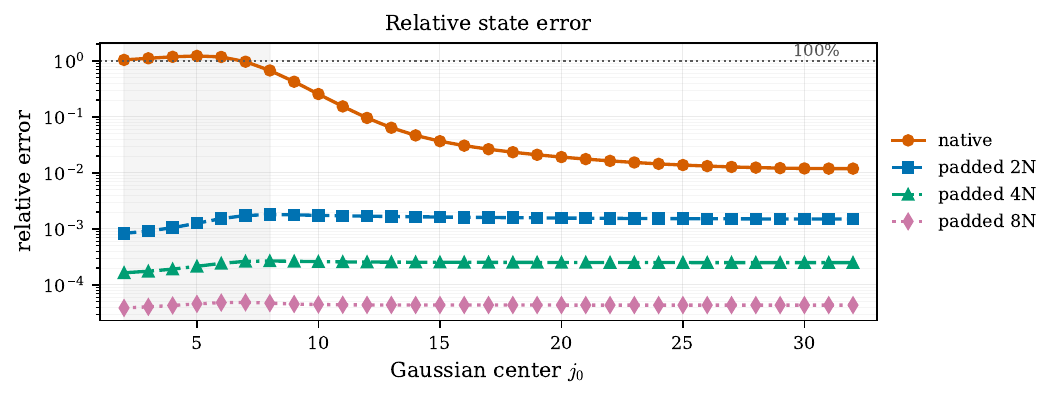}
		\caption{Relative state error as the Gaussian center \(j_0\) moves away from
			the boundary. The plotted quantity is
			\(\|({\rm surrogate}-A_{\alpha,h}^{(N)})u^{(j_0)}\|_2/
			\|A_{\alpha,h}^{(N)}u^{(j_0)}\|_2\).}
		\label{fig:center-sweep}
	\end{figure}
	
	\section{Semi-discrete kernel and decay}
	\label{app:kernel}
	
	For \(u\in\ell^2(h\mathbb Z)\), the semi-discrete Fourier transform convention
	is Eq.~\eqref{eq:semi-discrete-transform},
	\begin{equation}
		\widehat u_h(\xi)=\Fh[u](\xi)=h\sum_{j\in\mathbb Z}u_j e^{-i\xi hj},
		\qquad
		\xi\in[-\pi/h,\pi/h],
		\label{eq:semi-discrete-transform}
	\end{equation}
	with inverse Eq.~\eqref{eq:semi-discrete-inverse},
	\begin{equation}
		u_j=\frac{1}{2\pi}\int_{-\pi/h}^{\pi/h}\widehat u_h(\xi)e^{i\xi hj}\,d\xi .
		\label{eq:semi-discrete-inverse}
	\end{equation}
	The operator in Eq.~\eqref{eq:A-full} has the convolution form
	Eq.~\eqref{eq:kernel-convolution},
	\begin{equation}
		(A_{\alpha,h}u)_k=\sum_{j\in\mathbb Z}c_{k-j}u_j,
		\label{eq:kernel-convolution}
	\end{equation}
	where the kernel is given by Eq.~\eqref{eq:kernel-integral},
	\begin{equation}
		c_m=\omega_{\alpha,h}(m)
		=
		\frac{h}{2\pi}
		\int_{-\pi/h}^{\pi/h}|\xi|^\alpha e^{i\xi hm}\,d\xi
		=
		\frac{h}{\pi}\int_0^{\pi/h}\xi^\alpha\cos(mh\xi)\,d\xi .
		\label{eq:kernel-integral}
	\end{equation}
	After \(s=h\xi\), this becomes Eq.~\eqref{eq:scaled-kernel-integral},
	\begin{equation}
		c_m=h^{-\alpha}\frac{1}{\pi}\int_0^\pi s^\alpha\cos(ms)\,ds .
		\label{eq:scaled-kernel-integral}
	\end{equation}
	Standard Fourier-coefficient estimates for this piecewise-smooth even symbol
	give the decay estimate in Eq.~\eqref{eq:kernel-decay},
	\begin{equation}
		|c_m|\le C_\alpha h^{-\alpha}|m|^{-r_\alpha},
		\qquad
		r_\alpha=\min(2,1+\alpha),
		\qquad |m|\ge1.
		\label{eq:kernel-decay}
	\end{equation}
	For \(\alpha=1\) and \(h=1\), Eq.~\eqref{eq:alpha-one-kernel} gives
	\begin{equation}
		c_m=\frac{(-1)^m-1}{\pi m^2},\qquad m\ne0,\qquad c_0=\frac{\pi}{2}.
		\label{eq:alpha-one-kernel}
	\end{equation}
	For \(\alpha=2\), this construction corresponds to the semi-discrete spectral
	operator with continuum symbol \(|\xi|^2\) sampled on the semi-discrete Fourier
	cell. This should not be confused with the local second-difference Laplacian,
	whose finite-difference symbol is
	\begin{equation}
		4h^{-2}\sin^2(h\xi/2).
	\end{equation}

	\section{Proofs of aliasing identity and Schur bound}
	\label{app:proofs}
	
	\begin{proof}[Proof of Proposition~\ref{prop:alias}]
		The Toeplitz identity \((\AahN)_{ij}=c_{i-j}\) follows directly from the
		zero-extension definition. For the circulant surrogate, its first column is the
		inverse discrete Fourier transform of the sampled symbol. Inserting the
		semi-discrete kernel representation and using the finite Fourier comb
		in Eq.~\eqref{eq:finite-fourier-comb},
		\begin{equation}
			\frac{1}{N}\sum_{k=0}^{N-1}e^{2\pi ikq/N}
			=
			\mathbf 1_{q\in N\mathbb Z}
			\label{eq:finite-fourier-comb}
		\end{equation}
		folds the infinite Toeplitz sequence modulo \(N\). Hence
		Eq.~\eqref{eq:aliasing-proof-line} follows:
		\begin{equation}
			(\widetilde{A}^{(N)}_{\alpha,h})_{ij}
			=
			\sum_{\ell\in\mathbb Z}c_{i-j+\ell N}.
			\label{eq:aliasing-proof-line}
		\end{equation}
		Subtracting \(c_{i-j}\) gives the nonzero-image error.
	\end{proof}
	
	\begin{proof}[Proof of Theorem~\ref{thm:embedding} and Eq.~\eqref{eq:tail-bound}]
		Applying Proposition~\ref{prop:alias} at size \(M\) gives
		Eq.~\eqref{eq:size-M-aliasing-proof},
		\begin{equation}
			(\widetilde{A}^{(M)}_{\alpha,h})_{ij}
			=
			\sum_{\ell\in\mathbb Z}c_{i-j+\ell M},
			\qquad 0\le i,j\le M-1.
			\label{eq:size-M-aliasing-proof}
		\end{equation}
		Restricting both indices to \(0\le i,j\le N-1\) separates the \(\ell=0\) term,
		which is \((\AahN)_{ij}\), from the residual in Eq.~\eqref{eq:E-entries}. This
		proves Eq.~\eqref{eq:compressed-operator}; the compressed block
		Eq.~\eqref{eq:compressed-block} follows by inserting the exact ancilla block of
		the size-\(M\) block-encoding.
		
		For the norm bound, fix \(i\). Since \(0\le i,j\le N-1\) and \(\ell\ne0\),
		Eq.~\eqref{eq:padded-distance} holds:
		\begin{equation}
			|i-j+\ell M|\ge M-N+1.
			\label{eq:padded-distance}
		\end{equation}
		For \(M\ge2N\), the map \((j,\ell)\mapsto i-j+\ell M\) is injective at fixed
		\(i\), so each row sum is bounded by Eq.~\eqref{eq:row-sum-bound}:
		\begin{equation}
			\sum_{j=0}^{N-1}|(E^{(M)})_{ij}|
			\le
			\sum_{|r|\ge M-N+1}|c_r|.
			\label{eq:row-sum-bound}
		\end{equation}
		The same bound holds for column sums because \(c_r=c_{-r}\). The Schur test then
		gives Eq.~\eqref{eq:schur-bound-proof}:
		\begin{equation}
			\|E^{(M)}\|_2
			\le
			\sqrt{\|E^{(M)}\|_1\|E^{(M)}\|_\infty}
			\le
			\sum_{|r|\ge M-N+1}|c_r|.
			\label{eq:schur-bound-proof}
		\end{equation}
	\end{proof}
	
	Combining this with the decay estimate from Appendix~\ref{app:kernel}, with
	\(K=M-N+1\), yields Eq.~\eqref{eq:tail-sum-estimate}:
	\begin{equation}
		\sum_{|r|\ge K}|c_r|
		\le
		2C_\alpha h^{-\alpha}\sum_{r\ge K}r^{-r_\alpha}
		\le
		\frac{2C_\alpha h^{-\alpha}}{r_\alpha-1}K^{-(r_\alpha-1)}.
		\label{eq:tail-sum-estimate}
	\end{equation}
	Inverting Eq.~\eqref{eq:tail-sum-estimate} gives the scaling in
	Eq.~\eqref{eq:M-scaling}.
	
	\section{Concrete N=8 oracle example}
	\label{app:toy-example}
	
	Take \(N=8\), \(h=1\), and \(\alpha=1\). The FFT-ordered frequencies are listed
	in Eq.~\eqref{eq:N8-frequencies}:
	\begin{equation}
		0,\ \pi/4,\ \pi/2,\ 3\pi/4,\ -\pi,\ -3\pi/4,\ -\pi/2,\ -\pi/4 .
		\label{eq:N8-frequencies}
	\end{equation}
	With \(\lmax=\pi\), the normalized symbol is Eq.~\eqref{eq:N8-normalized-symbol}:
	\begin{equation}
		\phi_k
		=
		\frac{|\bar\xi_k|}{\pi}
		=
		\left(0,\frac14,\frac12,\frac34,1,\frac34,\frac12,\frac14\right).
		\label{eq:N8-normalized-symbol}
	\end{equation}
	Thus the diagonal oracle is Eq.~\eqref{eq:N8-oracle}:
	\begin{equation}
		U_D=\sum_{k=0}^{7}\ket{k}\bra{k}\otimes R_y(2\arccos\phi_k).
		\label{eq:N8-oracle}
	\end{equation}
	The ancilla-\(\ket{0}\) block is Eq.~\eqref{eq:N8-oracle-block}:
	\begin{equation}
		(I\otimes\bra{0})U_D(I\otimes\ket{0})
		=
		\diag\left(0,\frac14,\frac12,\frac34,1,\frac34,\frac12,\frac14\right).
		\label{eq:N8-oracle-block}
	\end{equation}
	Conjugating by \(QFT_8\) gives the periodic size-\(8\) surrogate in
	Eq.~\eqref{eq:N8-circulant}:
	\begin{equation}
		\widetilde{A}^{(8)}_{1,1}
		=
		QFT_8^{-1}
		\diag\left(0,\frac{\pi}{4},\frac{\pi}{2},\frac{3\pi}{4},
		\pi,\frac{3\pi}{4},\frac{\pi}{2},\frac{\pi}{4}\right)
		QFT_8 .
		\label{eq:N8-circulant}
	\end{equation}
	
	For the same parameters, the open Toeplitz corner entry is
	\((A_{1,1}^{(8)})_{0,7}=c_{-7}\). The circulant entry contains the image
	shown in Eq.~\eqref{eq:N8-corner}:
	\begin{equation}
		(\widetilde{A}^{(8)}_{1,1})_{0,7}=c_{-7}+c_1+c_9+\cdots,
		\label{eq:N8-corner}
	\end{equation}
	so the large contribution \(c_1=-2/\pi\) is the artificial wrap-around coupling
	that padding suppresses.
	
	\bibliographystyle{apsrev4-2}
	\bibliography{references}
	
\end{document}